\def\BibTeX{{\rm B\kern-.05em{\sc i\kern-.025em b}\kern-.08em
    T\kern-.1667em\lower.7ex\hbox{E}\kern-.125emX}}
\newtheorem{theorem}{Theorem}[section]
\newtheorem{assumption}{Assumption}
\title{\LARGE \bf
Lyapunov Analysis of Least Squares Based Direct Adaptive Control}
\author{Nursefa Zengin, Baris Fidan, Ladan Khoshnevisan

\thanks{This is the second version of the arXiv:2007.08578 in Arxiv and it is submitted to IEEE CDC 2022. This work is supported by the Canadian NSERC Discovery Grant 116806."}

\thanks{Authors are with the Department of Mechanical and Mechatronics Engineering,
        University of Waterloo, ON, Canada
        {\tt\small nyarbasi,fidan,lkhoshne@uwaterloo.ca}}%

}
\begin{document}

\maketitle
\thispagestyle{empty}
\pagestyle{empty}

\begin{abstract}
Adaptive control strategies usually are designed based on gradient methods for the sake of simplicity in Lyapunov analysis. However, least squares (LS)-based parameter identifiers, with proper selection of design parameters, exhibit better transient performance than the gradient-based ones, from the aspects of convergence speed and robustness to measurement noise. On the other hand, most of the LS-based adaptive control procedures are designed via the indirect adaptive control approaches, due to the difficulty in integrating an LS-based adaptive law within the direct approaches starting with a certain Lyapunov-like cost function to be driven to (a neighborhood of) zero. In this paper, a formal constructive analysis framework is proposed to integrate the recursive LS-based parameter identification with direct adaptive control. To this end, a Lyapunov-like function is proposed for the analysis to achieve adaptive laws, which guarantee the exponential convergence of the parameters. Application of the proposed procedure in adaptive cruise control design is studied through Matlab/Simulink and CarSim simulations, validating the analytical results.
\end{abstract}


\section{Introduction}
 Stability and convergence analysis of adaptive controller schemes has traditionally been based on Lyapunov stability notions and techniques \cite{sun,fidan,narendrabook,goodwinbook,krsticbookadaptive} . Lyapunov-like functions are selected in the design of adaptive control schemes to penalize the magnitude of the tracking or regulation error but at the same time to facilitate designing an adaptive law to generate the parameter estimates used by the control law. Adaptive control designs targeting to drive a Lyapunov-like function to zero mostly lead to gradient based adaptive laws with constant adaptive gain. On the other hand, it is well observed that least-squares (LS) algorithms have the advantage of faster convergence; hence, LS based adaptive control has potential to enhance convergence performance in direct adaptive control approaches as well \cite{fidan,guler1,guler2,krstic,ls3}.

Despite wide use of gradient based online parameter identifiers, LS adaptive algorithms with forgetting factor are developed to be capable of faster settling and/or being less sensitive to measurement noises. Such properties have been justified by various simulation and experimental results \cite{guler1,guler2,ls_app4,slotine, koksal}. LS-based online parameter identification has been used for achieving better convergence and robustness to measurement noises in indirect adaptive control schemes as well as integrated direct/indirect adaptive control procedures  \cite{ls3,ls4,ls6,ls7,ls9,ls10,ls11,cho,krstic,karafyllis2018,jiang2015}.

In addition to the existing mathematical LS based adaptive control design studies, there are some publications in the recent literature on real-time applications, including those on robotic manipulators \cite{ls_app3,ls_app4,ls_app7, slotine}, unmanned aerial vehicles \cite{ls_app1,ls_app8,koksal}, and passenger vehicles \cite{vahidi,bae,pavkovic2009,acosta2016,chen2018,seyed2019,patra2015}. Most of the existing studies on LS based adaptive control follow the indirect approach as opposed to the direct adaptive control. One reason for this is that constructive Lyapunov analysis of direct adaptive control is complicated for producing an LS based adaptive control scheme.

This paper proposes a constructive analysis framework for recursive LS (RLS) online parameter identifier based direct model reference adaptive control (MRAC). In the literature, \cite{sun,fidan} considered the possible use of LS based online parameter identifiers in direct MRAC. However, the proof and the Lyapunov analysis were not provided in detail. Several techniques have been developed to robustify the LS based online parameter identifiers with respect to the loss of adaptation or parameter bursts related to the gain (covariance) matrix becoming arbitrarily small or arbitrarily large, including use of parameter projection, resetting, saturation, and forgetting factor \cite{ortega}. The role of forgetting factor is extensively investigated in \cite{goel} and \cite{guler1}, where it is demonstrated that without forgetting factor the parameter estimates converge to the real values only asymptotically (and typically slower), whereas  with forgetting factor the convergence becomes exponentially fast, which leads to specific design procedures for different applications. For instance, a composite LS method is provided in \cite{slotine}, where a Lyapunov-like function with time-varying gain matrix is utilized. However, the design in \cite{slotine} is for a specific system model suitable for robot manipulators, which limits the applicability of the proposed procedure as is.

Constructive Lyapunov analysis of RLS parameter identifier based direct adaptive control, which is used to build the adaptive control laws, is studied in this paper. The main difference from the gradient based approaches is replacement of the constant adaptation gain with a time varying adaptive gain (covariance) matrix. For a systematic construction of the direct MRAC scheme with time-varying adaptive gain, a Lyapunov-like function is constructed through which an LS parameter identification based direct adaptive control scheme is established to guarantee asymptotic stability. The proposed procedure is utilized in an adaptive cruise control (ACC) application case study to demonstrate the transient performance, validate the analytical results, and compare the performance with the gradient based adaptive controllers through a set of Matlab/Simulink and CarSim simulations.


The paper is organized as follows. Section II is dedicated to background on direct MRAC design. Section III provides Lyapunov-like function composition and analysis. Comparative simulation testing and analysis of the ACC application case study is presented in Section IV. Final remarks of the paper are given in Section V.

\section{Background: Direct Model Reference Adaptive Control}
In model reference adaptive control (MRAC), desired plant behaviour is described by a reference model which is often formulated in the form of a transfer function driven by a reference signal. Then, a control law is developed via model matching so that the closed loop system has a transfer function equal to the reference model \cite{sun,fidan,narendrabook,goodwinbook}.  Consider the SISO LTI plant
\begin{equation} \label{eq:xpdot}
\begin{aligned}
\dot{x}_p (t) &= A_p x_p (t) + B_p u_p (t), \quad x(0) = x_0, &\\ y_p(t)& = C_p^T x_p(t), &
\end{aligned}
\end{equation}
 \noindent
with state $x_p \in \mathbb{R}^n$, input $u_p \in \mathbb{R}$, output $y_p \in \mathbb{R}$, and system matrices $A_p, B_p, C_p$ of appropriate dimensions.
The transfer function of the plant is given by
\begin{equation} \label{eq:Gp}
G_p (s)=k_p \frac{Z_p (s)}{R_p (s)},
\end{equation}
\noindent
where $k_p$ is the high frequency gain, and $Z_p(s)$ and $R_p(s)$ are monic polynomials. Assume that the plant \eqref{eq:xpdot} is minimum phase, i.e, $Z_p(s)$ is Hurwitz. Consider the reference model
\begin{equation} \label{eq:xmdot}
\begin{aligned}
\dot{x}_m (t) &= A_m x_m (t) + B_m r (t), \quad x_m(0) = x_{m0}, &\\ y_m (t)& = C_m^T x_m (t), &
\end{aligned}
\end{equation}
\noindent
which is fed by the reference input signal $r \in \mathbb{R}$.
The transfer function of the reference model \eqref{eq:xmdot} is given by
\begin{equation} \label{eq:Wm}
 W_m (s)=k_m \frac{Z_m (s)}{R_m (s)},
\end{equation}
\noindent
where $k_m$ is the high frequency gain, and $Z_m(s)$ and $R_m(s)$ are monic polynomials. The MRAC task \cite{sun,fidan} is to generate the control signal $u_p$  so that all the closed-loop system signals are bounded and the plant output $y_p$ tracks the reference model output $y_m$, under the following assumptions:
\begin{assumption} {Plant Assumptions.}
\begin{itemize}
\item[i] $Z_p (s)$ is a monic Hurwitz polynomial.
\item[ii] Upper bound $n$ of the degree $n_p$ of $R_p (s)$ is known.
\item[iii] Relative degree $n^*=n_p-m_p$ of $G_p (s)$ is known, where $m_p$ denotes the degree of $Z_p (s)$.
\item[iv] $\textrm{sign}(k_p)$ is known.
\end{itemize}
\end{assumption}
\begin{assumption} {Reference Model Assumptions}
\begin{itemize}
\item[i] $Z_m (s), R_m (s)$ are monic Hurwitz polynomials of degree $q_m,p_m$, respectively.
\item[ii] Relative degree $n_m=p_m-q_m$ of $W_m (s)$ is the same as that of $G_p (s)$, i.e, $n^*=n^*_m$.
\end{itemize}
\end{assumption}
\noindent
Consider the fictitious feedback control law \cite{sun,fidan}
\begin{equation} \label{direct_u}
u_p={\theta^*_1}^T \frac{\alpha(s)}{\Lambda (s)} u_p+ {\theta^*_2}^T \frac{\alpha(s)}{\Lambda(s)} y_p+\theta^*_3 y_p+c^*_0 r,
\end{equation}
\noindent
where
\begin{equation*}
\begin{aligned}
& c^*_0=\frac{k_m}{k_p}, &\\ &\alpha (s) \triangleq \alpha_{n-2} (s) = [s^{n-2}, s^{n-3}, \cdots, s, 1]^T \quad \quad \text{for} \quad n\geq2, &\\ &\alpha (s) \triangleq 0 \qquad \qquad \qquad \qquad \qquad \qquad \qquad \quad  \text{for} \quad n = 1.  &
\end{aligned}
\end{equation*}
\noindent
$\Lambda (s)$ is an arbitrary monic Hurwitz polynomial of degree $n-1$ containing $Z_m (s)$ as a factor, i.e.,
\begin{equation*}
\Lambda (s) = \Lambda_0 (s) Z_m(s)
\end{equation*}
\noindent
implying that $\Lambda_0 (s)$ is monic and Hurwitz. The fictitious ideal model reference control (MRC) parameter vector $ \theta^*=\begin{bmatrix} \theta^{*T}_1 & \theta^{*T}_2 & \theta^*_3 & c^*_0 \end{bmatrix}^T$ is chosen so that the transfer function from $r$ to $y_p$ is equal to $W_m (s)$.
\noindent
The closed-loop reference to output relation for the MRC scheme above is derived in \cite{sun,fidan} as
\begin{equation} \label{gc}
y_p = G_c (s) r,
\end{equation}
\noindent
where
\begin{equation*}
G_c = \frac{c^*_0 k_p Z_p \Lambda^2}{\Lambda ( \Lambda- \theta^{*T}_1 \alpha R_p - k_p Z_p (\theta^{*T}_2 \alpha  + \theta^*_3 \Lambda)}.
\end{equation*}
\noindent
The ideal MRC parameter vector $\theta^*$ is selected to match the coefficients of $G_c (s)$ in \eqref{gc} and $W_m (s)$ in \eqref{eq:Wm}.
A state-space realization of the ideal MRC law \eqref{direct_u} is given by \cite{sun,fidan}
\begin{equation} \label{eq:omega_up_MRC}
\begin{aligned}
&\dot{\omega}_1(t)=F\omega_1(t)+g u_p(t), \quad \omega_1(0)=0, &\\& \dot{\omega}_2(t)=F\omega_2(t)+g y_p(t), \quad \omega_2(0)=0, &\\& u_p(t)=\theta^{*T} \omega(t),&
\end{aligned}
\end{equation}
\noindent
where $\omega_1, \omega_2 \in \mathbb{R}^{n-1},$
\begin{equation*}
\begin{aligned}
&\theta^*=\begin{bmatrix} \theta^{*T}_1 & \theta^{*T}_2 & \theta^*_3 & c^*_0 \end{bmatrix}^T, \quad \omega=\begin{bmatrix} \omega_1^T & \omega_2^T & y_p & r \end{bmatrix}^T, & \\ &F= \begin{bmatrix} -\lambda_{n-2} & -\lambda_{n-3} & -\lambda_{n-4} & \cdots & -\lambda_{0} \\  1 & 0 &0 & \cdots & 0 \\ 0 & 1 & 0 & \cdot & 0 \\  \vdots & \vdots & \ddots & \ddots & \vdots \\ 0 & 0 & \cdots & 1 & 0 \end{bmatrix}, &\\ & \Lambda (s)=s^{n-1}+\lambda_{n-2} s^{n-2} +\cdots+\lambda_1 s+\lambda_0=det(sI-F), &\\& g=\begin{bmatrix} 1 & 0 & \cdots&0 \end{bmatrix}^T.&
\end{aligned}
\end{equation*}
\noindent
The MRAC scheme for the actual case where the plant parameters are unknown is derived by following the certainty equivalence approach and modifying \eqref{eq:omega_up_MRC} as
\begin{equation} \label{eq:omega_up_MRAC}
\begin{aligned}
&\dot{\omega}_1(t)=F\omega_1+g u_p(t), \quad \omega_1(0)=0, &\\& \dot{\omega}_2(t)=F\omega_2(t)+g y_p(t), \quad \omega_2(0)=0, &\\& u_p(t)=\theta^{T}(t) \omega(t),&
\end{aligned}
\end{equation}
\noindent
where $\theta(t)$ is the online estimate of the unknown ideal MRC parameter vector $\theta^*$. The adaptive law to generate $\theta(t)$ can be formed considering the following composite state space representation of the closed-loop system \cite{fidan}:
\begin{equation} \label{eq:Ycdot}
\begin{aligned}
\dot{Y}_c(t) &= A_0 Y_c(t) + B_c u_p(t), &\\ y_p(t) &= C_c^T Y_c(t),
\end{aligned}
\end{equation}
\noindent
where $Y_c=[x_P^T,\omega_1^T, \omega_2^T]^T$,
\begin{equation*}
A_0 = \begin{bmatrix} A_p & 0 & 0 \\ 0 & F & 0 \\ g C_p^T & 0 & F  \end{bmatrix} , \quad B_c=\begin{bmatrix} B_p \\ g \\ 0 \end{bmatrix}, \quad C_c^T= \begin{bmatrix} C_p^T, & 0,& 0 \end{bmatrix}.
\end{equation*}
\noindent
The system equation \eqref{eq:Ycdot} can also be expressed as \cite{fidan}
\begin{equation}
\label{eq:Ycdot2}
\begin{aligned}
\dot{Y}_c(t) &= A_c Y_c(t) + B_c c_0^* r(t) + B_c ( u_p(t) - \theta^{*T} \omega (t) ),& \\ y_p(t) &= C_c^T Y_c(t), &
\end{aligned}
\end{equation}
\noindent
where
\begin{equation*}
A_c=\begin{bmatrix} A_p+B_p \theta_3^*C_p^T & B_p \theta_1^{*T} & B_p \theta_2^{*T} \\ g\theta^*_3C^T_p & F+g\theta^{*T}_1 & g\theta^{*T}_2 \\  g C^T_p & 0 & F \end{bmatrix}.
\end{equation*}
\noindent
Consider the fictitious system
\begin{equation}
\label{eq:Ymdot}
\begin{aligned}
\dot{Y}_m(t) &= A_c Y_m(t) + B_c c_0^* r(t),& \\ \bar{y}_m(t) &= C_c^T Y_m(t), &
\end{aligned}
\end{equation}
\noindent
which is obtained by substituting \eqref{eq:omega_up_MRC} in \eqref{eq:Ycdot2}. Noting that the ideal MRC law \eqref{eq:omega_up_MRC} guarantees the matching of the closed-loop $r$-to-$y_p$ transfer function with the reference model transfer function $W_m(s)$, the $r$-to-$\bar{y}_m$ transfer function of \eqref{eq:Ymdot} satisfies
\begin{equation} \label{eq:Wm2}
W_m(s) = C_c^T (sI-A_c)^{-1} B_c c_0^*.
\end{equation}
\noindent
Furthermore, since \eqref{eq:xmdot} and \eqref{eq:Ymdot} are state-space representations of the same stable transfer function $W_m(s)$, $y_m(t)-\bar{y}_m(t)$ converges to zero exponentially fast. Hence, defining the output tracking error $e_1=y_p-\bar{y}_m$, exponential convergence of the tracking error $y_p-y_m$ to zero is equivalent to exponential convergence of $e_1$ to zero. Moreover, defining the state mismatch vector $e=Y_c-Y_m \in \mathbb{R}^{3n-2}$ and subtracting \eqref{eq:Ymdot} from \eqref{eq:Ycdot2} we obtain
\begin{equation} \label{eq:edot}
\begin{aligned}
&\dot{e}(t)=A_c e(t)+B_c (u_p(t)-\theta^{*T} \omega(t)), \quad e(0)=e_0, &\\& e_1=C^T_c e. &
\end{aligned}
\end{equation}
\noindent
Hence, we have
\begin{equation}
e_1 = W_m(s) \rho^* ( u_p - \theta^{*T} \omega ),
\end{equation}
\noindent
where $\rho^*=1/c_0^*$. Further, substituting \eqref{eq:omega_up_MRAC} into \eqref{eq:edot}, we obtain
\begin{equation} \label{eq:edot2}
\begin{aligned}
&\dot{e}=A_c e+B_c \tilde{\theta}^{T} \omega, &\\& e_1=C^T_c e, &
\end{aligned}
\end{equation}
\noindent
where
\begin{equation}
\tilde{\theta}(t) = \theta(t)-\theta^*.
\end{equation}

\section{Lyapunov-Like Function Composition and Analysis for Least-Squares Based Direct MRAC}
\noindent
In the typical direct adaptive control designs of the literature, which are gradient adaptive law based, the Lyapunov-like function is chosen as

\begin{equation} \label{eq:V1}
V_1(\tilde{\theta}, e)=\frac{e^T P_c e}{2}+\frac{\tilde{\theta}^T \Gamma^{-1} \tilde{\theta}}{2}|\rho^*|,
\end{equation}

\noindent
where $\tilde{\theta}=\theta-\theta^*$, $\theta^*$ and $\theta$, respectively, are the ideal MRC and actual MRAC parameter vectors defined in Section II, $P_c=P^T_c$ is a positive definite matrix satisfying certain conditions to be detailed in the sequel, and $\Gamma=\Gamma^T$ is a constant positive definite adaptive gain matrix. $P_c$ is selected to satisfy the Meyer-Kalman-Yakubovich Lemma \cite{fidan} algebraic equations

\begin{equation} \label{eq:PcAc}
\begin{aligned}
 P_c A_c+ A_c^T P_c&=-qq^T-\nu_c L_c, &\\P_c B_c c^*_0&=C_c, &
\end{aligned}
\end{equation}
\noindent
where $q$ is a vector, $L_c=L^T_c >0$, and $\nu_c>0$ is small constant. The time derivative $\dot{V}_1$ of $V_1$ along \eqref{eq:edot2},\eqref{eq:PcAc} is
\begin{equation}
\label{eq:V1dot}
\dot{V}_1=-\frac{e^T qq^T e}{2}-\frac{\nu_c}{2}e^T L_c e+e^T P_c B_c c^*_0\rho^* \tilde{\theta}^T \omega+\tilde{\theta}^T \Gamma^{-1} \dot{\tilde{\theta}}|\rho^*|.
\end{equation}
\noindent
Since $e^T P_c B_c c^*_0=e^T C_c=e_1$ and $\rho^*=|\rho^*| sgn(\rho^*)$, defining the gradient based adaptive law
\begin{equation} \label{eq:thetadot}
\dot{\theta}=-\Gamma \omega e_1 sgn(\rho^*)
\end{equation}
\noindent
leads to
\begin{equation} \label{eq:V1dot2}
\dot{V}_1=-\frac{e^T qq^T e}{2}-\frac{\nu_c}{2}e^T L_c e \leq 0,
\end{equation}
\noindent
noting that $\dot{\tilde{\theta}}=\dot{\theta}$. The equations \eqref{eq:V1} and \eqref{eq:V1dot2} imply that $V, \tilde{\theta}, \theta \in \mathcal{L}_\infty$ and $e \in \mathcal{L}_2 \cap \mathcal{L}_\infty$. Further,since the reference model system \eqref{eq:Wm2} is stable, we have $Y_m \in \mathcal{L}_\infty$. Hence we also have $Y_c = Y_m +e$, $\bar{y}_m = C_c^T Y_m\in \mathcal{L}_\infty$. By \eqref{eq:xpdot},\eqref{eq:omega_up_MRAC},\eqref{eq:Ycdot2}, this further implies that $x_p, y_p, \omega_1, \omega_2, u_p = \theta^T\omega \in \mathcal{L}_\infty$, i.e., all the signals in the closed-loop plant are bounded. Moreover, since $\dot{e} \in \mathcal{L}_\infty$, based on Barbalat's Lemma, $\lim_{t \to +\infty} e(t)=0$. Hence, the tracking error $e_1 = y_p - \bar{y}_m = C_c^T e$  converges to zero as time goes to infinity.

With the gradient based adaptive law \eqref{eq:thetadot} with constant adaptive $\Gamma$ gain, fast adaptation can be achieved only by using a large adaptive gain to reduce the tracking error rapidly. However, introduction of a large adaptive gain $\Gamma$ in many cases leads to high-frequency oscillations which adversely affects robustness of the adaptive control law.

Unlike the gradient based adaptive law \eqref{eq:thetadot} with constant adaptive gain $\Gamma$, generation of a time varying adaptive law gain matrix $P(t)$ that is adjusted based on identification error during estimation process, would allow an initial large adaptive gain to be set arbitrarily and then to be driven to  lower values to adaptively achieve the desired tracking performance.

For generation of the time varying gain $P(t)$, an efficient systematic approach is use of LS based adaptive laws, which are observed to have the advantage of faster convergence and robustness to measurement noises \cite{fidan,guler1,guler2,krstic,ls3}. Next, we propose a formal constructive analysis framework for integration of recursive LS (RLS) based parameter identification to direct adaptive control, following the steps above, but constructing a new Lyapunov-like function to replace \eqref{eq:V1}, aiming to formally establish am adaptive control scheme law that involves the control structure \eqref{eq:omega_up_MRAC} and an RLS based alternative of the adaptive law \eqref{eq:thetadot}.

Aiming to replace the constant adaptive gain $\Gamma$ with a time-varying  gain matrix $P(t)$, consider the following Lyapunov-like function in place of \eqref{eq:V1}:
\begin{equation} \label{eq:V2}
V_2(\tilde{\theta}, e, t)=\frac{e^T(t) P_c e(t)}{2}+\frac{\tilde{\theta}^T(t) P^{-1}(t) \tilde{\theta}(t)}{2}|\rho^*|,
\end{equation}
where $P(t)$ is uniformly positive definite. The time derivative $\dot{V}_2$ of $V_2$ along the solution of \eqref{eq:V2} is
\begin{equation} \label{eq:V2dot}
\begin{aligned}
\dot{V}_2=&-\frac{e^T qq^T e}{2}-\frac{\nu_c}{2}e^T L_c e+e^T P_c B_c c^*_0\rho^* \tilde{\theta}^T \omega &\\ &+\frac{1}{2} \tilde{\theta}^T \frac{d(P^{-1})}{dt} \tilde{\theta}|\rho^*| + \tilde{\theta}^T P^{-1} \dot{\tilde{\theta}} |\rho^*|,&
\end{aligned}
\end{equation}
where
\begin{equation} \label{fraction}
\frac{d(P^{-1})}{dt} = -P^{-1} \dot{P} P^{-1}.
\end{equation}
If $P(t)$ is updated according to the RLS adaptive law with forgetting factor,
\begin{equation} \label{eq:Pdot}
\dot{P}=\beta P - P \omega \omega^T P,
\end{equation}
where $\beta>0$ is the forgetting factor (scalar design coefficient), then \eqref{fraction} becomes
\begin{equation} \label{fraction2}
\frac{d(P^{-1})}{dt} = -\beta P^{-1} + \omega \omega^T.
\end{equation}
Substituting \eqref{fraction2} into \eqref{eq:V2dot}, we obtain
\begin{equation} \label{eq:V2dot2}
\begin{aligned}
\dot{V}_2=&-\frac{e^T qq^T e}{2}-\frac{\nu_c}{2}e^T L_c e +e_1 \rho^* \tilde{\theta}^T \omega - \frac{\beta}{2} \tilde{\theta}^T P^{-1} \tilde{\theta}|\rho^*| &\\&+ \tilde{\theta}^T P^{-1} \dot{\tilde{\theta}} |\rho^*| + \frac{1}{2} \omega^T \tilde{\theta} \tilde{\theta}^T \omega |\rho^*|.&
\end{aligned}
\end{equation}
Defining the adaptive law
\begin{equation} \label{eq:thetadot2}
\dot{\theta}=-P \omega e_1  sgn(\rho^*) -\frac{1}{2} P \omega^T \tilde{\theta} \omega,
\end{equation}
where $P(t)$ is updated via \eqref{eq:Pdot}, noting that $\dot{\tilde{\theta}}=\dot{\theta}$, and substituting into \eqref{eq:V2dot2}, we obtain
\begin{equation} \label{eq:V2dot3}
\dot{V}_2=-\frac{1}{2}e^T qq^T e-\frac{\nu_c}{2}e^T L_c e-\frac{1}{2}\tilde{\theta}^T P^{-1} \beta \tilde{\theta} |\rho^*| \leq 0,
\end{equation}
leading to the following theorem, which summarizes the  stability properties of the LS based direct MRAC scheme  \eqref{eq:omega_up_MRAC},\eqref{eq:Pdot},\eqref{eq:thetadot2}.

\begin{theorem}
The RLS parameter estimation based MRAC scheme \eqref{eq:omega_up_MRAC},\eqref{eq:Pdot},\eqref{eq:thetadot2} has the following properties:
\begin{itemize}
\item[i.] All signals in the closed-loop are bounded and tracking error converges to zero in time for any reference input $r \in \mathcal{L}_\infty$.
\item[ii.] If the reference input $r$ is sufficiently rich of order $2n$, $\dot{r}\in \mathcal{L}_\infty$, and $Z_p(s), R_p(s)$ are relatively coprime, then $\omega$ is persistently exciting (PE), viz.,

\begin{equation} \label{PE}
\int^{t+T_0}_{t} \omega(\tau) \omega^T(\tau) d\tau \geq \alpha_0T_0 I, \quad \alpha_0, T_0 >0, \quad \forall t\geq 0,
\end{equation}

which implies that $P,P^{-1} \in \mathcal{L}_{\infty}$ and $\theta(t)\to\theta^{*}$ as $t\to\infty$. In the case of $\beta=0$ (pure-RLS), $\theta\rightarrow\theta^*$ and $e_1\rightarrow 0$ as $t\rightarrow \infty$. When $\beta>0$, which is RLS with forgetting factor,  the parameter error $\| \tilde{\theta} \| = \| \theta-\theta^* \|$ and the tracking error $e_1$ converges to zero exponentially fast.
\end{itemize}
\end{theorem}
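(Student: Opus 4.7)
My plan is to prove the two properties in turn, building directly on the inequality $\dot V \leq 0$ already established with $V$ as in \eqref{lyapunov_direct_ls} and $\dot V$ as in \eqref{lyapunov_direct_ls_deriv2}. For part (i), the argument closely parallels the gradient-case analysis immediately preceding the theorem. From $\dot V \leq 0$ together with positive definiteness of $P_c$ and $P^{-1}(t)$, I first conclude $V, e, \tilde\theta \in \mathcal{L}_\infty$, and hence $\theta \in \mathcal{L}_\infty$. Boundedness of the reference model state $Y_m$ combined with $e = Y_c - Y_m$ then yields $Y_c \in \mathcal{L}_\infty$, so $x_p, y_p, \omega_1, \omega_2 \in \mathcal{L}_\infty$; since $r \in \mathcal{L}_\infty$ by hypothesis, also $\omega \in \mathcal{L}_\infty$ and $u_p = \theta^T \omega \in \mathcal{L}_\infty$. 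Integrating the inequality $\dot V \leq -\tfrac{\nu_c}{2} e^T L_c e$ over $[0,\infty)$ gives $e \in \mathcal{L}_2$; combined with $\dot e \in \mathcal{L}_\infty$ from \eqref{error_eq2} (since $A_c$ is Hurwitz and $\tilde\theta, \omega$ are bounded), Barbalat's lemma forces $e(t) \to 0$, and hence $e_1 = C_c^T e \to 0$.

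For part (ii), the first step is the classical sufficient-richness result for MRAC regressors \cite{sun,fidan}: under coprimeness of $Z_p, R_p$, sufficient richness of $r$ of order $2n$, and $\dot r \in \mathcal{L}_\infty$, the closed-loop regressor $\omega$ satisfies the PE condition \eqref{PE}. PE of $\omega$ together with the forgetting-factor covariance update \eqref{pdotnew} with $\beta > 0$ then provides uniform two-sided bounds $p_1 I \preceq P(t) \preceq p_2 I$ for constants $0 < p_1 \leq p_2$, so that $P, P^{-1} \in \mathcal{L}_\infty$. Substituting these bounds into a finer inequality for $\dot V$ that retains the term $-\tfrac{\beta}{2}\tilde\theta^T P^{-1} \tilde\theta |\rho^*|$ from \eqref{lyapunov_direct_ls_deriv1} yields $\dot V \leq -\gamma V$ for some $\gamma > 0$, whence $V$, and therefore $\|\tilde\theta\|$ and $\|e\|$, decay exponentially, so that $\theta(t) \to \theta^*$ and $e_1 \to 0$ exponentially fast. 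When $\beta = 0$, the strict exponential decay of $V$ is lost, and asymptotic convergence $\theta \to \theta^*$ is recovered from PE through the standard LTV stability theorem applied to the $\tilde\theta$ dynamics.

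The hard part will be part (ii), in two places: (a) establishing the uniform two-sided bounds on $P(t)$ from the Riccati-type equation \eqref{pdotnew} under PE and a positive forgetting factor, which relies on the standard bounded-covariance analysis for RLS with forgetting and must be checked carefully because PE of $\omega$ only holds along closed-loop trajectories already shown to be bounded in part (i); and (b) extracting a single exponential contraction rate for the full state $(e,\tilde\theta)$, which requires a comparison-lemma argument that balances the $e^T L_c e$ and $\tilde\theta^T P^{-1}\tilde\theta$ contributions in $\dot V$ and uses the two-sided $P$ bounds to relate the latter to $\|\tilde\theta\|^2$. By contrast, part (i) is essentially the same Barbalat-based argument as in the gradient case, and the extra $\tfrac12 P \epsilon \omega$ term in the adaptive law \eqref{ls_lyapunov1} creates no new difficulty once $\dot V \leq 0$ is in hand.
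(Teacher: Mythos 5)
Your proof follows the paper's own route almost step for step in part (i) and in the first half of part (ii): the paper likewise concludes $e,e_1\in\mathcal{L}_2$ and $\dot e,\dot e_1\in\mathcal{L}_\infty$ and invokes the Barbalat-type argument for $e_1\to 0$, and it obtains $P,P^{-1}\in\mathcal{L}_\infty$ by exactly the ``standard bounded-covariance analysis'' you defer to, carried out explicitly by setting $Q=P^{-1}$, solving $\dot Q=-\beta Q+\omega\omega^T$ by variation of constants, and using PE \eqref{PE} for the lower bound and boundedness of $\omega$ for the upper bound, yielding $\gamma_2^{-1}I\le P(t)\le\gamma_1^{-1}I$. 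The genuine difference is the last step: the paper disposes of exponential convergence with a one-line appeal to \cite{fidan}, whereas you give a self-contained comparison argument, retaining the $-\tfrac{\beta}{2}\tilde\theta^TP^{-1}\tilde\theta|\rho^*|$ term to get $\dot V\le-\gamma V$ with $\gamma=\min\{\beta,\ \nu_c\lambda_{\min}(L_c)/\lambda_{\max}(P_c)\}$ and then using the two-sided $P$ bounds to convert decay of $V$ into decay of $\|\tilde\theta\|$ and $\|e\|$ (hence $e_1$); this is more explicit than the paper and correctly isolates where PE is indispensable, since without an upper bound on $P$ exponential decay of $V$ says nothing about $\tilde\theta$. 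Two caveats you should make explicit. First, your ``finer inequality'' requires the $\tfrac{\epsilon^2}{2}|\rho^*|$ term in \eqref{lyapunov_direct_ls_deriv1} to be cancelled by the cross term of the adaptive law; with the sign as printed in \eqref{ls_lyapunov1} the two $\epsilon^2$ contributions add rather than cancel, so the bookkeeping must be redone (with the cross term entering so as to cancel, one gets $\dot V\le-\tfrac{1}{2}e^Tqq^Te-\tfrac{\nu_c}{2}e^TL_ce-\tfrac{\beta}{2}\tilde\theta^TP^{-1}\tilde\theta|\rho^*|$, which is exactly what your argument needs, and also shows \eqref{lyapunov_direct_ls_deriv2} should be read as an inequality). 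Second, in part (i) concluding $\tilde\theta\in\mathcal{L}_\infty$ from boundedness of $V$ needs a uniform lower bound on $P^{-1}(t)$, i.e., an upper bound on $P(t)$, which the pure forgetting-factor update \eqref{pdotnew} does not guarantee absent PE (covariance wind-up); the paper's proof carries the same implicit assumption, so this is a shared gap rather than one you introduced, but a remark (covariance bounding/resetting, or deferring to the PE case) is warranted.
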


\begin{proof}
\noindent
\begin{itemize}
\item[i.] $e \in \mathcal{L}_2$, $\theta,\omega, \in \mathcal{L}_\infty$, and $\dot{e} \in \mathcal{L}_\infty$. Therefore, all signals in the closed loop plant are bounded. In order to complete the design, we need to show tracking error $e_1$ converges to the zero asymptotically with time. Using \eqref{eq:V2}, \eqref{eq:V2dot3}, we know that $e, e_1 \in \mathcal{L}_2$. Using, $\theta,\omega,e \in \mathcal{L}_\infty$ in \eqref{eq:edot2}, we have $\dot{e},\dot{e}_1 \in \mathcal{L}_\infty$. Since  $\dot{e},\dot{e}_1 \in \mathcal{L}_\infty$ and  $e_1 \in \mathcal{L}_2$, the tracking error $e_1$ goes to zero as $t$ goes to infinity.
\item[ii.]  Considering pure-RLS, when $\beta=0$, from \eqref{eq:Pdot} we have $\dot{P}=-P \omega \omega^T P\leq 0$. So, $P(t)$ is non-increasing i.e.,  $P(t)\leq P_0$. As $P(t)=P^T(t)> 0$, it has a limit i.e., $\lim_{t\to\infty} {P(t)}=\bar{P}$, where $\bar{P}=\bar{P}^T$ is a positive constant matrix. Furthermore, \eqref{eq:Pdot} results in
\begin{eqnarray*}
P(t)&=&P_0- \int_{0}^{t} P(\tau)\omega \omega^T P(\tau) \,d\tau \\
&\leq&\lambda_{min}(P_0)-\lambda_{min}(P_0) \int_{t-T_0}^{t} \omega \omega^T \,d\tau,
\end{eqnarray*}
where $\lambda_{min}(P_0)$ is the minimum singular value of $P_0$. By Theorem 3.4.3 of \cite{fidan}, if $r$ is sufficiently rich of order $2n$ then the $2n$ dimensional regressor vector $\omega$ is PE.  Because $\omega(t)$ is PE, we have
\begin{equation}
P(t) \leq \lambda_{min}(P_0)(1-\alpha T_0)I.
\end{equation}
As a result,
\begin{equation}
\bar{P} \leq P(t) \leq \gamma_1 I.
\end{equation}
So, $P(t)\in\mathcal{L}_\infty$ in pure-LS method. On the other hand, (\eqref{eq:V2dot3}) results in $\dot{V}_2(t)\leq 0$. Therefore, $V, \tilde{\theta} \in \mathcal{L}_\infty$ and $e \in \mathcal{L}_2 \cap \mathcal{L}_\infty$. Since $e = Y_c - Y_m$ and $\bar{y}_m(t) = C_c^T Y_m(t)$, $Y_m, \bar{y}_m, Y_c \in \mathcal{L}_\infty$ that gives use $ x_p, y_p, \omega_1, \omega_2 \in \mathcal{L}_\infty$. Moreover, we have $u_p = \theta^T\omega$ and $\theta, \omega \in \mathcal{L}_\infty$; therefore, $u_p \in \mathcal{L}_\infty$. So, all the signals in the closed-loop plant are bounded. Moreover, since $\dot{e} \in \mathcal{L}_\infty$, based on Barbalat's Lemma, $\lim_{t \to +\infty} e(t)=0$. Hence, the tracking error $e_1 = y_p - \bar{y}_m = C_c^T e$ converges to zero as time goes to infinity.

Now, we consider RLS with forgetting factor, in which $\beta>0$. Let $Q=P^{-1}$ and \eqref{fraction2} can be rewritten as
\begin{equation} \label{fraction2_new}
\dot{Q} = -\beta Q + \omega \omega^T.
\end{equation}
and the solution becomes
\begin{equation} \label{Q_integral}
Q(t) = e^{-\beta t} Q_0 + \int^t_0 e^{-\beta (t-\tau)} \omega(\tau) \omega^T(\tau) d\tau.
\end{equation}
Since $\omega(t)$ is PE,
\begin{equation} \label{Q_integral2}
\begin{aligned}
Q(t) &\geq \int^t_{T_0} e^{-\beta (t-\tau)} \omega(\tau) \omega^T(\tau) d\tau &\\&\geq \bar{\alpha}_0 e^{-\beta T_0} \int^t_{T_0} e^{-\beta (t-\tau)} \omega(\tau) \omega^T(\tau) d\tau &\\&\geq \beta_1 e^{-\beta T_0} I, \qquad \forall t \geq T_0,&
\end{aligned}
\end{equation}
where $\beta_1=\bar{\alpha}_0\alpha_0T_0,$ and  $\alpha_0,\bar{\alpha}_0,T_0>0$ are design constants, given in \eqref{PE}. For $t\leq T_0$,
\begin{equation} \label{c12}
Q(t) \geq e^{-\beta T_0}Q_0 \geq \lambda_{min}(Q_0)e^{-\beta T_0}I \geq \gamma_2 I \quad \forall t \geq 0,
\end{equation}
where $\gamma_2 = min\{\frac{\alpha_0 T_0}{\beta},\lambda_{min}(Q_0)\}e^{-\beta T_0}$. Since $\omega$ is PE,
\begin{equation} \label{c13}
Q(t) \leq Q_0 +\beta_2\int^t_{0} e^{-\beta (t-\tau)}d\tau I \leq \gamma_3 I, \quad \beta_2>0.
\end{equation}
where $\gamma_3 = \lambda_{max}(Q_0)+\frac{\beta_2}{\beta}>0$. Using \eqref{c12} and \eqref{c13}, we obtain
\begin{equation}
\gamma^{-1}_3 I \leq P^{-1}(t)=Q((t)\leq \gamma^{-1}_2 I.
\end{equation}
Therefore, $P(t),Q(t)\in\mathcal{L}_\infty$.
Based on \eqref{eq:V2dot3} and \eqref{eq:PcAc}, we have

\begin{equation}
\dot{V}_2(t) \leq -\lambda_{min}(A_c) e^T P_c e-\frac{1}{2} \beta \tilde{\theta} P^{-1} \tilde{\theta} |\rho^*|.
\end{equation}

where $\lambda_{min}(A_c)$ is the singular value of $A_c$. Therefore, considering $\alpha_0=min(\lambda_{min}(A_c),\frac{1}{2}\beta)$ results in

\begin{equation}
\dot{V}_2(t) \leq -\alpha_0 (e^T P_c e-\tilde{\theta} P^{-1} \tilde{\theta}|\rho^*|)=-\alpha_0 V_2(t).
\end{equation}

So, we have
\begin{equation}
V_2(t) \leq e^{-\alpha_0 t} V_2(0).
\end{equation}
which implies exponential convergence of $\tilde{\theta}$ and the tracking error $e_1(t)$. Exponential convergence is interesting from the adaptive control point of view, as it provides fast adaptation and robustness against noise and external disturbances, which are inevitable in practical applications.

\end{itemize}

\end{proof}

\noindent
Comparing two adaptive laws in \eqref{eq:thetadot} and \eqref{eq:thetadot2}, we can clearly see the effect of time varying covariance matrix reflected as an additional term to the similar part of \eqref{eq:thetadot}.

\section{A Simulation Case Study}
For the application of RLS based adaptive control, ACC case study is considered, and all three procedures, gradient method, pure-RLS (PRLS) ($\beta=0$) and RLS ($\beta>0$) are simulated to scrutinize the efficacy of RLS method. A basic ACC scheme is given in Fig. \ref{acc_scheme}. ACC regulates the following vehicle's speed $v$ towards the leading vehicle's speed $v_l$ and keeps the distance between vehicles $x_r$ close to desired spacing $s_d$.
\begin{figure}[h!]
\centering
	\includegraphics[width=0.4\textwidth , height=0.1 \textwidth]{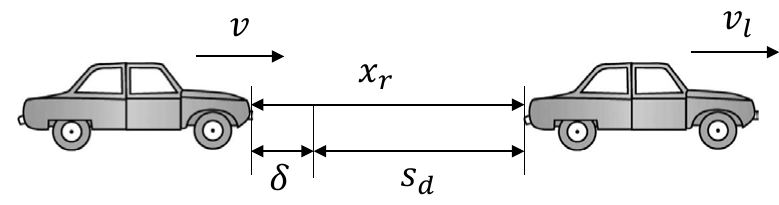}
	\caption{Leading and following vehicles.}
\label{acc_scheme}
\end{figure}
\noindent
The control objective in ACC is to make the speed error close to zero as time increases. This objective can be expressed as
\begin{equation}
v_r \rightarrow 0, \quad \delta \rightarrow 0, \quad t\rightarrow \infty,
\end{equation}
\noindent
where $v_r =v_l-v$ which is defined as the speed error or sometimes relative speed, $\delta=x_r-s_d$ is the spacing error. The desired spacing is proportional to the speed since the desired spacing between vehicles is given as
\begin{equation}
s_d=s_0+h v
\end{equation}
\noindent
where $s_0$ is the fixed spacing for safety so that the vehicles are not touching each other at zero speed and $h$ is constant time headway. Control objective should also satisfies that $a_{min} \leq \dot{v} \leq a_{max}$, and small $|\ddot{v}|$. First constraint restricts ACC vehicle generating high acceleration and the second one is given for the driver's comfort. For ACC system, a simple model is considered approximating the actual vehicle longitudinal model without considering nonlinear dynamics which is given by
\begin{equation}
\dot{v}=-a v+b u+d,
\end{equation}
where $v$ is the longitudinal speed, $u$ is the throttle/brake command, $d$ is the modeling uncertainty, $a$ and $b$ are positive constant parameters. We assume that $d,\dot{d}v_l,\dot{v}_l$ are all bounded. MRAC is considered so that the throttle command $u$ forces the vehicle speed to follow the output of the reference model
\begin{equation}
v_m=\frac{a_m}{s+a_m} (v_l+k\delta),
\end{equation}
where $a_m$ and $k$ are positive design parameters.We first assume that $a,b,$ and $d$ are known and consider the control law as follows:
\begin{equation}
u=k^*_1 v_r+k^*_2 \delta +k^*_3,
\end{equation}
\begin{equation}
k^*_1=\frac{a_m-a}{b}, \quad k^*_2=\frac{a_m k}{b}, \quad k^*_3=\frac{a v_l-d}{b}.
\end{equation}
Since $a,b,$ and $d$ are unknown, we change the control law as
\begin{equation} \label{u_acc}
u=k_1 v_r+k_2 \delta +k_3,
\end{equation}
where $k_i$ is the estimate of $k^*_i$ to be generated by the adaptive law so that the closed-loop stability is guaranteed. The tracking error is given as
\begin{equation} \label{acc_bdpm}
e=v-v_m=\frac{b}{s+a_m}(-k^*_1 v_r-k^*_2 \delta -k^*_3+u).
\end{equation}
Substituting the control law in \eqref{u_acc} into \eqref{acc_bdpm}, we obtain
\begin{equation} \label{acc_error}
e=\frac{b}{s+a_m}(\tilde{k}_1 v_r+\tilde{k}_2 \delta +\tilde{k}_3),
\end{equation}
where $\tilde{k}_i=k_i-k^*_i$ for $i=1,2,3$. In order to find the adaptive law, consider the Lyapunov function and its time derivative \cite{fidan} as
\begin{equation} \label{acc_lyapunov}
V=\frac{e^2}{2}+\sum ^{3}_{i=1} \frac{b}{2 \gamma _i} \tilde{k}^2_i \quad \gamma_i>0, b>0,
\end{equation}
\begin{equation}
\dot{V}=-a_m e^2+b e (\tilde{k}_1 v_r+\tilde{k}_2 \delta +\tilde{k}_3) + \sum ^{3}_{i=1} \frac{b}{\gamma _i} \tilde{k}_i \dot{\tilde{k}}_i .
\end{equation}
Therefore, the following gradient based adaptive laws are applied to ACC
\begin{equation}
\begin{aligned}
\dot{k}_1&=Pr\{-\gamma_1 e v_r\}, &\\ \dot{k}_2&=Pr\{-\gamma_2 e \delta\}, &\\ \dot{k}_3&=Pr\{-\gamma_3 e\},&
\end{aligned}
\end{equation}
where the projection operator keeps $k_i$ within the lower and upper intervals and $\gamma_i$ are the positive constant adaptive gains. These adaptive laws lead to
\begin{equation}
\dot{V}=-a_m e^2.
\end{equation}
By projection operator, estimated parameters are guaranteed to be bounded by forcing them to remain inside the bounded sets, $\dot{V}$ implies that $e\in\mathcal{L}_\infty$, in turn all other signals in the closed loop are bounded. We apply RLS based adaptive law to \eqref{acc_lyapunov} and obtain following equations to be used in simulations
\begin{figure} [h!]
\centering
	\includegraphics[width=0.54 \textwidth , height=0.46\textwidth]{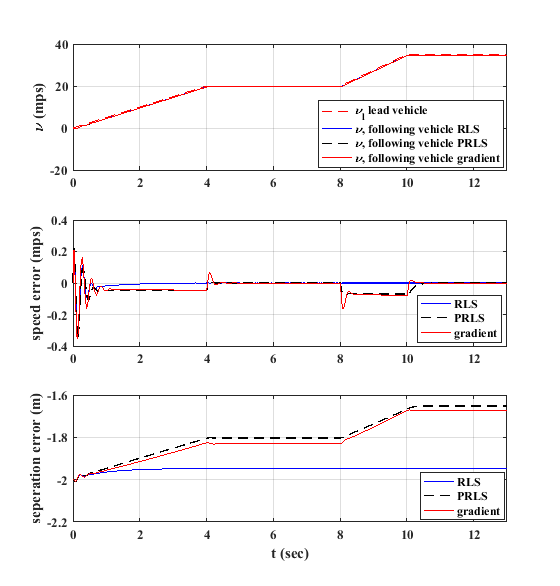}
	\caption{ACC comparison results, speed tracking and separation error in Matlab/Simulink.}
	\label{acc3}
\end{figure}
\begin{equation} \label{rls_dmracacc}
\begin{aligned}
\dot{\hat{\theta}}&=Pr\{\phi P_{ii} e \},  &\\ \dot{P}&= \beta P- P\phi \phi^T P, \quad&
\end{aligned}
\end{equation}
with
\begin{equation}
\begin{aligned}
&e=v-v_m, & \\ &\theta=\begin{bmatrix} k_1, & k_2, & k_3 \end{bmatrix} ^T, &\\ &\phi= \begin{bmatrix} \frac{ v_r}{s+a_m}, & \frac{ \delta}{s+a_m}, & \frac{1}{s+a_m} \end{bmatrix}^T, &
\end{aligned}
\end{equation}
where $P_{ii}$ are the diagonal elements of $P$ covariance matrix, i=1,2,3.

For gradient based adaptive law, $\gamma_1=50, \gamma_2=30, \gamma_3=40$ constant gains are given. For RLS based algorithm $\beta=0.95$ and $P(0)=100I_3$ are given. For both RLS and gradient schemes, a Gaussian noise is applied ($\sigma = 0.05$). Simulation results from Matlab/Simulink for throttle system are given in Fig. \ref{acc3}. Fig. \ref{acc3} shows the vehicle following for both gradient based adaptive law and RLS based adaptive law. The speed error in velocity tracking shows the better performance for RLS adaptive law. Furthermore, it can be inferred from this figure that the RLS provides exponential convergence.

\begin{figure} [h!]
\centering
	\includegraphics[width=0.54 \textwidth , height=0.46\textwidth]{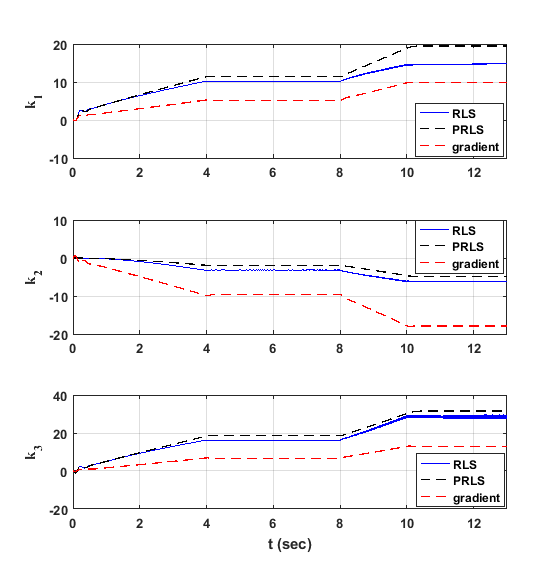}
	\caption{ACC adaptive parameters in Matlab/Simulink.}
	\label{acc4}
\end{figure}

The adaptive parameters are illustrated in Fig. \ref{acc4}. It is shown that these parameters are achieved adaptively based on the system specifications and they are all stable.

\begin{figure} [h!]
\centering
	\includegraphics[width=0.54 \textwidth , height=0.44\textwidth]{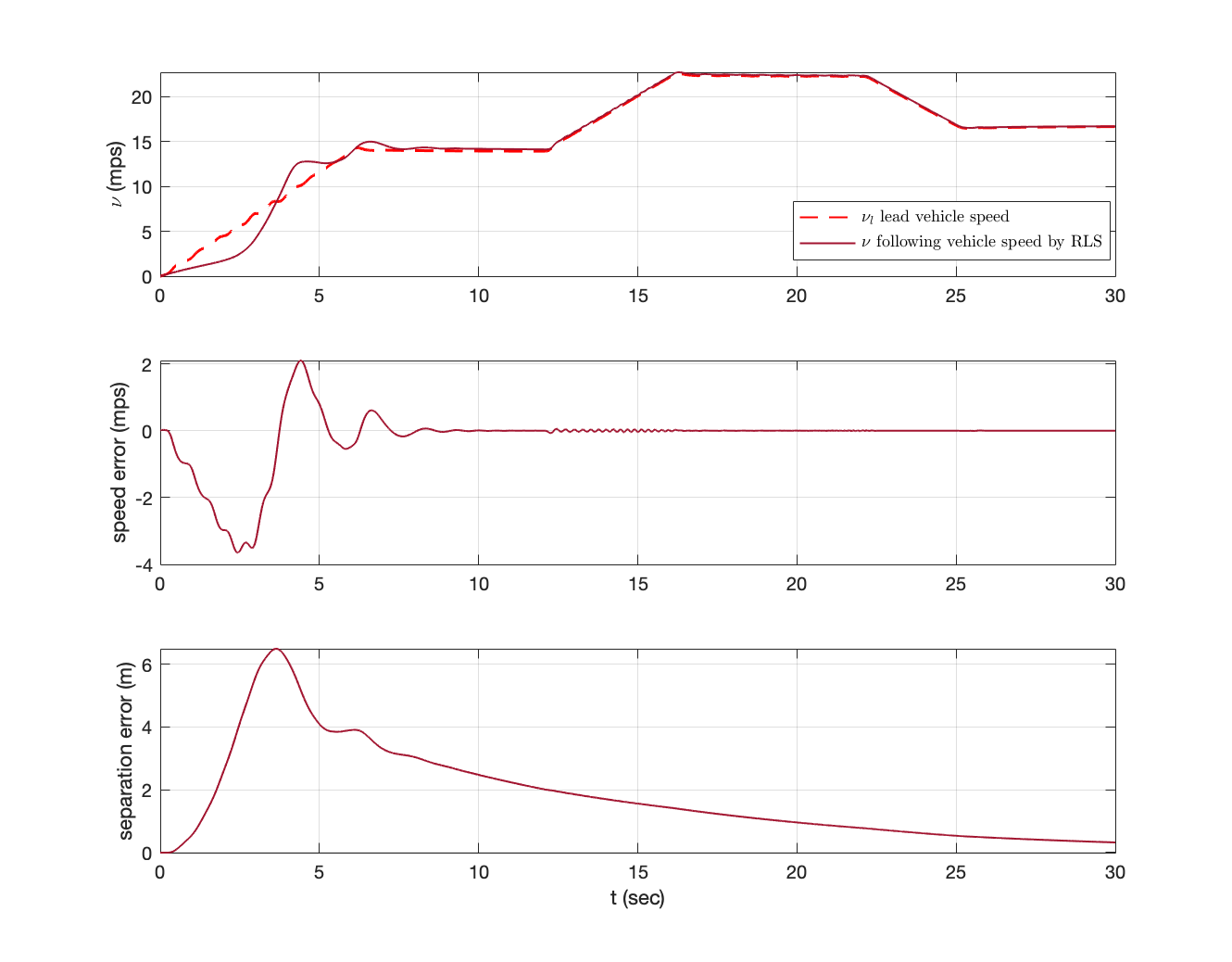}
	\caption{RLS based ACC results in CarSim.}
	\label{acc5}
\end{figure}

We also implemented RLS based adaptive control algorithm in \eqref{rls_dmracacc} to CarSim for more realistic results. The vehicle parameters used in CarSim are as follows: $m=567.75~ kg$, $R=0.3~ m$, $I=1.7 ~kgm^2$, $B=0.01 ~kg/s$. The adaptive gains for both gradient and RLS are used the same as in Matlab/Simulink. CarSim results for RLS based ACC can be found in Fig. \ref{acc5}. Results demonstrate the ability of the following vehicle equipped with RLS based adaptive law on dry road by adjusting the speed and the distance between the leading vehicle and itself.

\section{CONCLUSIONS}
In this paper, a constructive Lyapunov analysis of RLS based parameter estimation direct adaptive control is proposed. A systematic representation of designing a time varying adaptive gain (covariance) matrix is proposed via Lyapunov analysis, where it is analytically demonstrated that the adaptive parameters converge to the actual ones exponentially fast. The simulation results on an ACC model via Matlab/Simulink, and the comparative achievements, with respect to gradient based method and pure-LS procedure, validate the analytical gains. Moreover, it is shown that LS-based approach outperforms the others. Furthermore, a realistic vehicle software, CarSim, is utilized to scrutinize the applicability of the proposed method on a real ACC system.

\bibliographystyle{IEEEtran}
\bibliography{ZFK_CDC2022_LS_DAC_07}

\end{document}